\newcommand{\C}{\mathbb{C}}
\newcommand{\R}{\mathbb{R}}
\newcommand{\K}{\mathbb{K}}
\newcommand{\td}{\text{d}}
\newcommand{\Res}{\text{Res}}
\newcommand{\Tr}{\text{Tr}}
\newcommand{\KK}{\mathcal{K}}
\newtheorem{theorem}{Theorem}[section]
\newtheorem{proposition}[theorem]{Proposition}
\newtheorem{lemma}[theorem]{Lemma}
\newtheorem{remark}[theorem]{Remark}
\newtheorem{definition}[theorem]{Definition}
\newtheorem{corollary}[theorem]{Corollary}
\title{Cyclic $BV_\infty$ algebra and Frobenius manifold}
\author{Hao Wen}
\newcommand{\Addresses}{{
  \bigskip
  \footnotesize

  Hao Wen, \textsc{School of Mathematical Sciences and the Key Laboratory of Pure Mathematics and Combinatorics, Nankai University, Tianjin 300071, China}\par\nopagebreak
  \textit{E-mail address} \texttt{wenhao@nankai.edu.cn}

}}
\begin{document}
\maketitle

\begin{abstract}
	We describe the construction of Frobenius manifold out of a cyclic (commutative) $BV_\infty$ algebra $(A,\Delta)$ under the assumption of a Hodge-to-de Rham degeneration property and the existence of a compatible homotopy retract of $A$ onto its cohomology.
	We then apply it to Jacobi manifolds and Hermitian manifolds, generalizing known results in literature.
\end{abstract}

\tableofcontents

\section{Introduction}

Frobenius manifolds, axiomized by Dubrobin and appeared even earlier in Kyoji Saito's work, are very important objects in mathematical physics. They can be used to formulate intriguing relations between branches of mathematics such as symplectic geometry, complex geometry, singularity theory and integrable system.
In \cite{BK}, Barannikov and Kontsevich gave a recipe of constructing Frobenius manifold structure out of a dGBV algebra with a perfect pairing and satisfying the $ddbar$-condition (see also \cite{M}).
Since then, various generalizations have been studied. 
Besides looking for more dGBV algebras with $ddbar$-condition (see for example \cite{CZ98,CZ00}), researchers also consider more general algebraic structures and more general degeneration condition.

A natural generalization of dGBV algebra is the commutative $BV_\infty$ algebra (Definition \ref{BV algebra}) structure introduced by Kravchenko in \cite{K}.
This is a graded commutative algebra equipped with a sequence of operators with relations generalizing those for dGBV  algebra.
As is expected, commutative $BV_\infty$ algebras inherit and extend many properties of dGBV algebras. In particular, it defines an $L_\infty$ structure on the underlying graded commutative algebra (see \cite{BL}).
Additional condition is needed for a nice deformation theory of the $L_\infty$ structure.
In the dGBV case, the $ddbar$-condition is sufficient.
However, it turns out to be too restrictive in general.
A natural substitute is the Hodge-to-de Rham degeneration condition motivated from the $E_1$-degeneration of Hodge-to-de Rham spectral sequence of compact K\"ahler manifolds.
An example of dGBV algebra which satisfies the latter but not the $ddbar$-condition is the one in Landau-Ginzburg theory (see \cite{LW}).
For commutative $BV_\infty$ algebra, this condition is introduced in \cite{DSV13}. See Definition \ref{HdR data} for details.

There are many examples of commutative $BV_\infty$ algebras satisfying the Hodge-to-de Rham degeneration property. See \cite{DSV15} and \cite{CW} for instance.
A homotopy hypercommutative algebra structure is an extension of Frobenius manifold structure without a pairing (see \cite{DV}).
It is proved in \cite{DV} that if a dGBV algebra satisfies the Hodge-to-de Rham degeneration condition, then there is a homotopy hypercommutative algebra structure on its cohomology.
This is generalized later in \cite{DSV13} to the case of commutative $BV_\infty$ algebra.

When a given commutative $BV_\infty$ algebra is equipped with a cyclic structure (Definition \ref{cyclic BV}), one naturally expects a genuine Frobenius manifold structure on its cohomology.
For this, one need the notion of good basis (Definition \ref{good basis}) introduced by Kyoji Saito in his deformation theory of singularities.
The existence of good basis is usually highly nontrivial. In this paper, under the assumption of Hodge-to-de Rham degeneration, we give a sufficient condition for its existence by requiring compatibility of the homotopy retract with the cyclic structure (Lemma \ref{compatibility}).
Under the degeneration and compatibility assumption, the construction of Frobenius manifold is almost routine. A slight generalization of the construction in \cite{T} can be used to show the existence to universal solution to quantum master equation (Lemma \ref{solution gamma}). The framework in \cite{L}, which is formulated for dGBV, can then be adjusted to commutative $BV_\infty$ case to construct a Frobenius manifold structure from the universal solution to quantum master equation. This leads to our main Theorem \ref{Frobenius manifold}.

In the last two section we consider the homotopy given by Hodge theory on compact oriented manifold and show they are compatible with the pairing induced by integration. Then we apply the framework described above to compact oriented Jacobi  manifolds and compact Hermitian manifolds. By \cite{DSV15} and \cite{CW}, both of them define commutative $BV_\infty$ algebra satisfying the degeneration property. We prove the existence of Frobenius manifold structure on cohomology by showing that the integration paring induces cyclic structures.
This gives a generalization of the results for Poisson manifolds and K\"ahler manifolds in \cite{CZ98,CZ00} and also a generalization of \cite{BK} to Jacobi manifolds and Hermitian manifolds different from \cite{DSV13,CW}.

\section{Commutative $BV_\infty$ algebra and degeneration condition}

In this section we give the definition of commutative $BV_\infty$ algebra and Hodge-to-de Rham degeneration condition. Some consequences of the degeneration condition are also discussed.
 
\begin{definition} \label{BV algebra}
	Let $\K$ be a field of characteristic $0$.
	A \emph{commutative $BV_\infty$ algebra}
	\begin{align*}
		(A,\cdot,\Delta_0=d,\Delta_1,\Delta_2,\cdots)
	\end{align*}
	is a unital differential graded commutative $\K$-algebra $(A,\cdot,d)$ equipped with operators $\Delta_k$ of degree $1-2k$ and of order at most $k+1$ satisfying $\Delta_k(1_A)=0$ for all $k \geq 0$ and
	\begin{align} \label{BV}
		\sum_{i=0}^k\Delta_i \Delta_{k-i} = 0 \quad \forall k\geq 0.
	\end{align}
\end{definition}

This notion of commutative $BV_\infty$ algebra was first introduced in \cite{K}, and is sometimes called derived $BV$ algebra or homotopy $BV$ algebra in literature.
Note that this is not the homotpy $BV$ algebra in full operadic sense.
When $\Delta_k = 0$ for all $k\geq 2$, it reduces to a dGBV algebra.
For small $k$, (\ref{BV}) reads:
\begin{align*}
	d^2 = 0, \quad d \Delta_1 + \Delta_1 d = 0, \quad d \Delta_2 + \Delta_1^2+ \Delta_2 d = 0.
\end{align*}
The cohomology $H(A)$ of $(A,d)$ can be viewed as a dg complex with zero differential.

Let $\hbar$ be a formal parameter of degree $2$ and $A[[\hbar]]$ be the algebra of $A$-valued formal power series in $\hbar$, then $\Delta := \sum_{k=0}^\infty \hbar^k \Delta_k$ is homogeneous of degree $1$. The identities in (\ref{BV}) can be written in a compact form as
\begin{align*}
	\Delta^2 = 0.
\end{align*}
In this way, $(A[[\hbar]], \Delta)$ is also a complex.
Note that since $\Delta_i, i\geq 1$ does not necessarily satisfy the Leibniz rule, this is in general not a dg complex.

By abuse of notation, we will also write the commutative $BV_\infty$ algebra structure on $A$ as $(A, \Delta)$.

\begin{definition}
	A homotopy retract of $(A,d)$ onto $(H(A),0)$ is the datum of chain maps
	\begin{align*}
		\iota: (H(A),0) \to (A,d),\quad p:(A,d) \to (H(A),0)
	\end{align*}
	and a contracting homotopy $h: A \to A[-1]$ such that
	\begin{align*}
		p\iota = \text{id}_{H(A)},\quad hd + dh = \iota p - \text{id}_A
	\end{align*}
	and
	\begin{align*}
		h^2=h\iota=ph=0.
	\end{align*}
\end{definition}

In literature, a homotopy retract satisfying the last three identities (also called side conditions) is said to be special. We omit the adjective for terminological simplicity.

Given a commutative $BV_\infty$ algebra $A$, one can associate to $A$ an $L_\infty$ algebra structure, see for example \cite{CW}. To have a nice deformation theory for such an $L_\infty$ algebra, one need some additional condition.
When $A$ is a dGBV algebra, the so-called $ddbar$-condition is sufficient.
It is satisfied by the dGBV algebra of polyvector fields on compact Calaibi-Yau manifold (\cite{BK}) and the dGBV algebra of Dolbeault forms on compact K\"ahler manifold (\cite{CZ00}). In the case of commutative $BV_\infty$ algebra, a natural candidate is the Hodge-to-de Rham degeneration condition.

\begin{definition} \label{HdR data}
	A unital commutative $BV_\infty$ algebra $(A,\Delta)$ is said to have a \emph{Hodge-to-de Rham degeneration data} if there is a special homotopy retract $(\iota,p,h)$ of $(A,d)$ onto $(H(A),0)$ such that
	\begin{align}\label{HdR}
		\sum_{j_1+j_2+\cdots+j_l = k;\, j_i \geq 1} p \Delta_{j_1} h \Delta_{j_2} h \cdots h \Delta_{j_l} \iota = 0 \quad \forall k\geq 1.
	\end{align}
\end{definition}

Such degeneration condition was first introduced in \cite{DV} for dGBV algebras and in \cite{DSV13} for commutative $BV_\infty$ algebra. In \cite{DSV15} the authors give several equivalent characterizations. The one given above says all the transferred operators of $\Delta_k$ on $H(A)$ by the homotoy retract $(\iota,p,h)$ vanish.
It can also be characterized by the $E_1$-degeneration of certain spectral sequence (hence the name) and a trivialization of the operator $\Delta$ on $A[[\hbar]]$. Moreover, it is shown that (\ref{HdR}) is independent of the choice of the homotopy retraction: if it holds for some particular choice of homotopy retract $(\iota,p,h)$, then it holds for any choice.
It is proved for dGBV algebras in \cite{DV} and for commutative $BV_\infty$ algebras in \cite{DSV13} that if $(A,\Delta)$ admits a Hodge-to-de Rham degeneration data, then there is a homotopy hypercommutative algebra structure on its cohomology.
The latter is an algebra over $\Omega H^\bullet (\mathcal{M}_{0,n+1})$, the operadic resolution of the operad of homology of the Deligne-Mumford moduli spaces of stable genus $0$ curves and it extends the information of a Frobenius manifold without a pairing.
\newline

In the remaining part of this paper we will always assume $A$ is a commutative $BV_\infty$ algebra admitting a Hodge-to-de Rham degeneration data.
By Proposition 2.4 of \cite{CW}, one can derive an explicit formula for the operator satisfying $d \Phi = \Phi \Delta$ and $\Phi \equiv id_A (\text{mod } \hbar)$ in terms of the homotopy retraction datum.
For our purpose, we need an explicit formula for its inverse.

\begin{lemma} \label{closed}
	Assume $(A,\Delta)$ admits a Hodge-to-de Rham degeneration data $(\iota,p,h)$, then for any $k \geq 1$,
	\begin{align*}
		d \sum_{j_1+j_2+\cdots+j_l = k;\, j_i \geq 1} \Delta_{j_1} h \cdots h \Delta_{j_l} \iota p = 0.
	\end{align*}
\end{lemma}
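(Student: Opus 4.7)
The plan is to prove the statement by induction on $k$. Denote
$V_k := \sum_{j_1+\cdots+j_l = k;\,j_i\geq 1} \Delta_{j_1} h \Delta_{j_2} h \cdots h \Delta_{j_l}$,
so the claim reads $dV_k \iota p = 0$. Since $\iota$ is a chain map into the zero-differential complex $(H(A),0)$, we have $d\iota = 0$ and hence $V_k d \iota p = 0$. It therefore suffices to prove $[d,V_k]\iota p = 0$, where the graded commutator reduces to $dV_k + V_k d$ because $V_k$ has odd total degree $1-2k$. The base case $k=1$ reads $[d,\Delta_1] = 0$, which is exactly the $k=1$ instance of (\ref{BV}).

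For the inductive step, I will use the recursion $V_k = \Delta_k + \sum_{j=1}^{k-1}\Delta_j h V_{k-j}$ and expand $[d, V_k]$ by the graded Leibniz rule, inserting the identities $[d,\Delta_j] = -\sum_{i=1}^{j-1}\Delta_i\Delta_{j-i}$ (from $\Delta^2 = 0$) and $[d,h] = \iota p - \mathrm{id}_A$. Five families of summands appear: (i) $-\sum_i\Delta_i\Delta_{k-i}$ from $[d,\Delta_k]$; (ii) $-\sum_j\sum_i\Delta_i\Delta_{j-i}hV_{k-j}$ from the inner $[d,\Delta_j]$; (iii) $\sum_j\Delta_jV_{k-j}$ from the $-\mathrm{id}_A$ part of $[d,h]$; (iv) $-\sum_j\Delta_j\iota pV_{k-j}$ from the $\iota p$ part of $[d,h]$; and (v) $\sum_j\Delta_j h[d,V_{k-j}]$.

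After post-composing with $\iota p$, family (v) vanishes by the inductive hypothesis applied to each $V_{k-j}$ (together with $d\iota = 0$), and family (iv) vanishes because the middle block $pV_{k-j}\iota$ is zero by the Hodge-to-de Rham degeneration condition (\ref{HdR}). Families (i), (ii) and (iii) in fact cancel already at the operator level: feeding the same recursion $V_{k-j} = \Delta_{k-j} + \sum_m\Delta_m h V_{k-j-m}$ into (iii), the diagonal $\sum_j\Delta_j\Delta_{k-j}$ cancels (i), while the remaining double sum matches (ii) after the relabelings $n = j + m$ and $(j,m) \mapsto (i, n-i)$.

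I expect the main obstacle to be the clean bookkeeping of this combinatorial cancellation, but since every sum is indexed by ordered compositions of $k$, there is essentially a unique natural reindexing that matches them up. Conceptually, the decomposition $dh = \iota p - \mathrm{id}_A - hd$ splits each $d$-commutator into three pieces: the $hd$ piece is absorbed into the inductive hypothesis, the $\iota p$ piece is killed precisely by the Hodge-to-de Rham hypothesis, and the remaining $-\mathrm{id}_A$ piece telescopes against the BV relation.
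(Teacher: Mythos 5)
Your proposal is correct and follows essentially the same route as the paper: induction on $k$, splitting off the first $\Delta_j$ via the recursion, using the $BV_\infty$ relation for $[d,\Delta_j]$ and $[d,h]=\iota p-\mathrm{id}_A$, with the $hd$ piece absorbed by the inductive hypothesis, the $\iota p$ piece killed by the degeneration condition (\ref{HdR}), and the remaining terms cancelling against the BV relation. The only difference is cosmetic: you phrase the bookkeeping through graded commutators and strip the trailing $\iota p$, while the paper manipulates the full composites directly.
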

\begin{proof}
	We prove by induction. For $k=1$, $d \Delta_1 \iota p = - \Delta_1 d \iota p = 0$. Assume the conclusion holds for $1,2,\cdots,k-1$, then for $k$,
	\begin{align*}
		&d \sum_{j_1+j_2+\cdots+j_l = k;\, j_i \geq 1} \Delta_{j_1} h \cdots h \Delta_{j_l} \iota p \\
		=& \sum_{j=1}^k d \Delta_j \sum_{j_1+j_2+\cdots+j_l = k-j;\, j_i \geq 1} h \Delta_{j_1} h \cdots h \Delta_{j_l} \iota p \\
		=& -\sum_{j=1}^k (\sum_{s=1}^{j-1} \Delta_s \Delta_{j-s}) \sum_{j_1+j_2+\cdots+j_l = k-j;\, j_i \geq 1} h \Delta_{j_1} h \cdots h \Delta_{j_l} \iota p \\
		\quad& + \sum_{j=1}^k \Delta_j (hd + id_A-\iota p) \sum_{j_1+j_2+\cdots+j_l = k-j;\, j_i \geq 1} \Delta_{j_1} h \cdots h \Delta_{j_l} \iota p \\
		=& -\sum_{j=1}^k (\sum_{s=1}^{j-1} \Delta_s \Delta_{j-s}) \sum_{j_1+j_2+\cdots+j_l = k-j;\, j_i \geq 1} h \Delta_{j_1} h \cdots h \Delta_{j_l} \iota p \\
		\quad& + \sum_{j=1}^k \Delta_j \sum_{j_1+j_2+\cdots+j_l = k-j;\, j_i \geq 1} \Delta_{j_1} h \cdots h \Delta_{j_l} \iota p \\
		=& \,0,
	\end{align*}
	where the second last identity follows from the induction hypothesis and degeneration assumption, while the last identity follows because the terms with opposite sign cancel.
\end{proof}

\begin{proposition} \label{splitting map}
	Let $s_k := -\Delta_k h + \sum_{j_1+j_2+\cdots+j_l = k;\, j_i \geq 1} h \Delta_{j_1} h \cdots h \Delta_{j_l} \iota p$, then $S = id_A + \sum_{k\geq 1} \hbar^k s_k$ satisfies
	\begin{align*}
		\Delta S = S d.
	\end{align*}	
\end{proposition}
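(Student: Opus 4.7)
My approach is to compare coefficients of $\hbar^k$ on the two sides of $\Delta S = Sd$. Using $s_0 = \op{id}_A$ and $\Delta_0 = d$, the $k=0$ case is automatic, and for $k\geq 1$ I must verify
\begin{equation*}
d s_k + \sum_{i=1}^{k-1}\Delta_i s_{k-i} + \Delta_k = s_k d.
\end{equation*}
I write $s_k = -\Delta_k h + T_k$ with $T_k := \sum_{j_1+\cdots+j_l=k,\, j_i\geq 1} h\Delta_{j_1}h\cdots h\Delta_{j_l}\iota p$, and expect the $\Delta_\ast h$-type and $T_\ast$-type contributions to decouple into two subproblems.

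For the $\Delta_\ast h$-type terms, the $BV_\infty$ relation $\sum_{i=0}^k\Delta_i\Delta_{k-i}=0$ rewrites $\sum_{i=1}^{k-1}\Delta_i\Delta_{k-i} h$ as $-(d\Delta_k+\Delta_k d)h$, and combined with the side condition $dh+hd=\iota p-\op{id}_A$ this collapses all such contributions to a single $\Delta_k\iota p$. The identity is thereby reduced to the purely ``tree'' statement
\begin{equation*}
dT_k + \Delta_k\iota p + \sum_{i=1}^{k-1}\Delta_i T_{k-i} = T_k d.
\end{equation*}
The right-hand side vanishes immediately because $T_k$ ends in $p$ and $pd=0$. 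For the left-hand side, I factor $T_k = hU_k$ with $U_k := \sum \Delta_{j_1}h\cdots h\Delta_{j_l}\iota p$ and apply $dh = \iota p - \op{id}_A - hd$: Lemma~\ref{closed} kills the $hdU_k$ term, and the Hodge-to-de Rham degeneration (\ref{HdR}) kills $\iota p U_k$ (since $U_k$ factors through the $p$-annihilated expression $\sum \Delta_{j_1}h\cdots h\Delta_{j_l}\iota$), leaving $dT_k = -U_k$.

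What then remains is the purely combinatorial identity $U_k = \Delta_k\iota p + \sum_{i=1}^{k-1}\Delta_i T_{k-i}$, which I obtain by splitting the compositions $(j_1,\ldots,j_l)$ of $k$ by the value of $j_1$: the length-$1$ composition contributes $\Delta_k\iota p$, and for $l\geq 2$ isolating the leading $\Delta_{j_1}h$ and letting $j_1=i$ run over $1,\ldots,k-1$ reconstructs $\Delta_i T_{k-i}$ exactly. The main obstacle is bookkeeping --- tracking carefully where each piece of $s_k$ lands, and verifying that no Koszul signs appear (they do not, as every manipulation is a composition of linear operators). The substantive input is that Lemma~\ref{closed} and the degeneration condition (\ref{HdR}) are precisely what is needed to convert $dT_k$ into $-U_k$, after which the recursion matches on the nose.
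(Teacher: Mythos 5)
Your proof is correct, and it follows essentially the same route as the paper: compare coefficients of $\hbar^k$, use the $BV_\infty$ relation together with the side condition $dh+hd=\iota p-\mathrm{id}_A$, invoke Lemma \ref{closed} and the degeneration condition (\ref{HdR}) to convert the ``tree'' sums (your $dT_k=-U_k$ is exactly the paper's step $\sum\Delta_{j_1}h\cdots h\Delta_{j_l}\iota p=-d\sum h\Delta_{j_1}\cdots h\Delta_{j_l}\iota p$), and finish with $pd=0$ and the splitting of compositions by their first part. Your presentation is a direct verification rather than the paper's (essentially cosmetic) induction, but the content is the same.
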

\begin{proof}
	Expand $S$ in powers of $\hbar$, then $\Delta S = S d$ is equivalent to a sequence of identities:
	\begin{align} \label{equation for S}
		\Delta_1 s_k + \cdots + \Delta_k s_1 + \Delta_{k+1} = s_{k+1} d - d s_{k+1} \qquad \forall k\geq 0,
	\end{align}
	which for $k=0$ should be understood as $\Delta_1 = s_1 d - d s_1$.
	Assume the identities for $0,1,\cdots,k$ hold, then we have
	\begin{align*}
		& d s_{k+1} - s_{k+1} d \\
		=& -\Delta_1 \Delta_k h + \sum_{j_1+j_2+\cdots+j_l = k;\, j_i \geq 1} \Delta_1 h \Delta_{j_1} h \cdots h \Delta_{j_l} \iota p \\
		& -\Delta_2 \Delta_{k-1} h + \sum_{j_1+j_2+\cdots+j_l = k-1;\, j_i \geq 1} \Delta_2 h \Delta_{j_1} h \cdots h \Delta_{j_l} \iota p \\
		&\cdots \\
		&-\Delta_k \Delta_1 h + \Delta_k h \Delta_1 \iota p +\Delta_{k+1} \iota p - \Delta_{k+1} dh -\Delta_{k+1} hd \\
		=&\, d \Delta_{k+1} h + \sum_{j_1+j_2+\cdots+j_l = k+1;\, j_i \geq 1} \Delta_{j_1} h \cdots h \Delta_{j_l} \iota p -\Delta_{k+1} hd
	\end{align*}
	By Lemma \ref{closed} and the degeneration assumption, we have 
	\begin{align*}
		&\sum_{j_1+j_2+\cdots+j_l = k+1;\, j_i \geq 1} \Delta_{j_1} h \cdots h \Delta_{j_l} \iota p \\ 
		=& -d \sum_{j_1+j_2+\cdots+j_l = k+1;\, j_i \geq 1} h \Delta_{j_1} h \cdots h \Delta_{j_l} \iota p
	\end{align*}
	Finally the identity $pd=0$ implies
	\begin{align*}
		 s_{k+1} := -\Delta_{k+1} h + \sum_{j_1+j_2+\cdots+j_l = k+1;\, j_i \geq 1} h \Delta_{j_1} h \cdots h \Delta_{j_l} \iota p
	\end{align*}
	solves (\ref{equation for S}).
\end{proof}

Applying $S$ to $\iota (a)$ for $a \in H(A)$ and using $h\iota = 0$, we get
\begin{corollary} \label{splitting}
	There is a map $S: H(A) \to H(A[[\hbar]],\Delta)$ given by
	\begin{align*}
		a \mapsto [\iota + \sum_{n=1}^\infty \hbar^n \sum_{j_1+j_2+\cdots+j_k = n;\, j_i \geq 1} h\Delta_{j_1}h\Delta_{j_2}\cdots h\Delta_{j_k}\iota](a).
	\end{align*}
\end{corollary}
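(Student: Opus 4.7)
The plan is to derive the corollary as a direct consequence of Proposition \ref{splitting map} together with the side conditions $h\iota = 0$ and $p\iota = \mathrm{id}_{H(A)}$.

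First, I would start from the operator $S = \mathrm{id}_A + \sum_{k\geq 1}\hbar^k s_k$ constructed in Proposition \ref{splitting map} and evaluate it on $\iota(a)$ for an arbitrary $a \in H(A)$. The formula for $s_k$ has two pieces: the term $-\Delta_k h \iota(a)$ vanishes thanks to $h\iota = 0$, and in the second piece every innermost composition $\iota p \iota(a)$ simplifies to $\iota(a)$ because $p\iota = \mathrm{id}_{H(A)}$. Collecting the surviving terms gives exactly the claimed expression
\begin{align*}
	S\iota(a) = \iota(a) + \sum_{n=1}^\infty \hbar^n \sum_{j_1+\cdots+j_k=n;\,j_i\geq 1} h\Delta_{j_1}h\Delta_{j_2}\cdots h\Delta_{j_k}\iota(a).
\end{align*}

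Next I would verify that this element is $\Delta$-closed, so that it defines a class in $H(A[[\hbar]],\Delta)$. Using the intertwining relation $\Delta S = Sd$ from Proposition \ref{splitting map} and the fact that $\iota$ is a chain map from the zero-differential complex $(H(A),0)$ to $(A,d)$, one has
\begin{align*}
	\Delta\bigl(S\iota(a)\bigr) = S\bigl(d\iota(a)\bigr) = S(0) = 0.
\end{align*}
Thus $S\iota(a)$ is a well-defined cocycle in $(A[[\hbar]],\Delta)$, and passing to its cohomology class yields the $\K[[\hbar]]$-linear map $S: H(A) \to H(A[[\hbar]],\Delta)$ of the statement.

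There is no serious obstacle here: the content is essentially the combination of the side conditions of a special homotopy retract with the already established identity $\Delta S = Sd$. The only mildly delicate point is bookkeeping, namely being careful that the term $-\Delta_k h$ in $s_k$ is genuinely annihilated on the image of $\iota$ and that the composition $\iota p \iota$ collapses to $\iota$, so that the sum over compositions in $s_k\iota$ matches the stated expansion exactly.
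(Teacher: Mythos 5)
Your proposal is correct and follows the paper's own argument: the paper likewise obtains the corollary by applying the operator $S$ of Proposition \ref{splitting map} to $\iota(a)$, using the side conditions $h\iota=0$ and $p\iota=\mathrm{id}_{H(A)}$ to reduce $s_k\iota$ to the stated sum, with $\Delta$-closedness following from $\Delta S=Sd$ and $d\iota=0$. Your write-up merely makes these bookkeeping steps explicit.
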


This gives explicit formulation that if $(A,\Delta)$ satisfies the Hodge-to-de Rham degeneration property, then $H(A[[\hbar]],\Delta)$ is a free $\K[[\hbar]]$-module with
\begin{align*}
	H(A[[\hbar]],\Delta) \cong H(A)[[\hbar]],
\end{align*}
as is shown in \cite{CW}.
The results above are also generalizations of that in \cite{LW}.

\section{Cyclic structure and Frobenius manifold}

In this section we construct Frobenius manifold out of a cyclic (commutative) $BV_\infty$ algebra under the degeneration assumption and a compatibility condition.

\begin{definition} \label{cyclic BV}
	A \emph{cyclic $BV_\infty$ algebra} of dimension $n$ is a triple $(A,\Delta,\Tr)$, where $(A, \Delta)$ is a unital commutative $BV_\infty$ algebra and $\Tr$ is a homogeneous $\K$-linear map $\Tr:A \to \K[-n]$ such that the induced pairing
	\begin{align*}
		(-,-)_A: A \times A \to \K[-n], \quad (a,b)_A := \Tr(a\cdot b)
	\end{align*}
	is a $\K$-bilinear pairing satisfying
	\begin{enumerate}
		\item For any $k \geq 0$,
		\begin{align*}
			(\Delta_k a ,b ) =(-1)^{|a|+k+1}(a, \Delta_k b);
		\end{align*}
		\item The induced pairing 
		\begin{align*}
			\KK^{(0)}(-,-): H^i(A) \times H^{n-i}(A) \to \K
		\end{align*}
		given by
		\begin{align*}
			([a],[b]) \mapsto (a,b)
		\end{align*}
		is perfect, which means $H^i(A)$ is finite dimensional for all $i$ and $\KK^{(0)}$ is non-degenerate.
	\end{enumerate}
\end{definition}

Denote the $\K[[\hbar]]$-extension of $(-,-)$ by  $(-,-)_\hbar$, then by (2) of the above definition we have
\begin{align*}
	(\Delta a, b)_\hbar = (-1)^{|a|+1} (a, \Delta^- b)_\hbar \quad a,b \in A,
\end{align*}
where $\Delta^-$ is defined in the same way as $\Delta$ with $\hbar$ replaced by $-\hbar$. So there is a well-defined pairing
\begin{align*}
	\KK(-,-): H(A[[\hbar]],\Delta) \times H(A[[\hbar]],\Delta) \to \K[[\hbar]]
\end{align*}
given by
\begin{align*}
	\KK([\alpha], [\beta]) := (\alpha, \bar\beta)_\hbar \qquad \alpha,\beta \in A[[\hbar]],
\end{align*}
where $\bar\beta$ is obtained from $\beta$ by replacing $\hbar$ by $-\hbar$.

Evaluating at $\hbar=0$ gives a chain map $T: (A[[\hbar]],\Delta) \to (A,d)$ and thus a map $T$ between cohomologies.
It is obvious that
\begin{align*}
	TS=id_{H(A)}
\end{align*}
and
\begin{align*}
	\KK(\alpha,\beta)|_{\hbar = 0} = \KK^{(0)}(T(\alpha),T(\beta)).
\end{align*}
The following notion was first used by K. Saito in his study in singularity theory. 

\begin{definition} \label{good basis}
	A $\K[[\hbar]]$-basis $\alpha_1,\cdots,\alpha_\mu$ of $H(A[[\hbar]],\Delta)$ is called a \emph{good basis} if
	\begin{align*}
		\KK(\alpha_i,\alpha_j) = \KK^{(0)}(T(\alpha_i),T(\alpha_j))\quad \forall 1 \leq i,j \leq \mu,
	\end{align*}
	i.e. all the higher order terms in $\hbar$ of $\KK(\alpha_i,\alpha_j)$ vanish.
\end{definition}

It is usually difficult to prove the existence of a good basis. We give in the following lemma a sufficient condition.

\begin{lemma} \label{compatibility}
	If the map $h:A \to A[-1]$ of the homotopy retract $(\iota,p,h)$ is compatible with the cyclic structure of $(A,\Delta)$ in the sense that
	\begin{align*}
		(h (a), b) = (-1)^{|a|} (a,h (b)) \qquad \forall\, a,b \in A,
	\end{align*}
	then for any $\K$-basis $a_1,\cdots,a_\mu$ of $H(A)$, $\alpha_1 := S(a_1),\cdots,\alpha_\mu := S(a_\mu)$ form a good basis.
\end{lemma}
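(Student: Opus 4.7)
The plan is to expand $\alpha_i = S(a_i)$ using the explicit formula in Corollary \ref{splitting} and then show, via the compatibility hypothesis on $h$ together with the side conditions $h^2 = h\iota = 0$, that every positive-order coefficient in $\hbar$ of $\KK(\alpha_i,\alpha_j)$ vanishes identically.

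Write $\alpha_i = \sum_{m\geq 0}\hbar^m s_m(a_i)$ with $s_0 = \iota$ and, for $m\geq 1$,
$$s_m(a) \;=\; \sum_{j_1+\cdots+j_l=m;\, j_i\geq 1} h\Delta_{j_1}h\Delta_{j_2}\cdots h\Delta_{j_l}\iota(a).$$
The crucial structural observation is that for $m\geq 1$, every summand of $s_m(a)$ begins with the operator $h$ on the left and ends with $\iota$ on the right. Since $\bar\alpha_j = \sum_{n\geq 0}(-\hbar)^n s_n(a_j)$, the coefficient of $\hbar^N$ in $\KK(\alpha_i,\alpha_j) = (\alpha_i,\bar\alpha_j)_\hbar$ equals
$$\sum_{m+n=N}(-1)^n\bigl(s_m(a_i),\,s_n(a_j)\bigr).$$

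For $N\geq 1$ I will show each of these bilinear terms vanishes. When both $m,n\geq 1$, a typical summand has the form $(hX,hY)$, and one application of compatibility transfers an $h$ across the pairing to give $\pm(X,h\cdot hY) = \pm(X,h^2Y) = 0$. When $m\geq 1$ and $n=0$, the summand reads $(hX,\iota(a_j))$ and compatibility rewrites it as $\pm(X,h\iota(a_j)) = 0$; the case $m=0,\, n\geq 1$ is symmetric. Hence only the $N=0$ term survives, yielding
$$\KK(\alpha_i,\alpha_j) \;=\; (\iota(a_i),\iota(a_j)) \;=\; \KK^{(0)}(T(\alpha_i),T(\alpha_j)),$$
where the last equality uses that $T([\alpha_i]) = a_i$ in $H(A)$, since $T\alpha_i = \iota(a_i)$ is a representative of $a_i$.

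The remaining requirement, that $\alpha_1,\ldots,\alpha_\mu$ actually form a $\K[[\hbar]]$-basis of $H(A[[\hbar]],\Delta)$, is already recorded after Corollary \ref{splitting}: each $\alpha_i$ reduces mod $\hbar$ to $\iota(a_i)$, representing the $\K$-basis $a_i$, so under the identification $H(A[[\hbar]],\Delta)\cong H(A)[[\hbar]]$ a standard Nakayama-type argument promotes this to a $\K[[\hbar]]$-basis. The main obstacle I anticipate is purely bookkeeping of the signs coming from the compatibility of $h$ and from the Koszul rule in the pairing; conceptually the proof reduces to the uniform observation that any nontrivial pairing of summands from $s_m(a_i)$ and $s_n(a_j)$ admits a single use of compatibility producing either $h^2$ or $h\iota$, both of which vanish by the side conditions.
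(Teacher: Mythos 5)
Your proposal is correct and follows essentially the same route as the paper: expand $\alpha_i=S(a_i)$ via the explicit formula, observe that every positive $\hbar$-order pairing is of the form $(\iota(a),h(\,\cdot\,))$ or $(h(\,\cdot\,),h(\,\cdot\,))$, and kill these using the compatibility of $h$ with the pairing together with the side conditions $h^2=h\iota=0$. The only difference is that you spell out the $\hbar$-coefficient bookkeeping and the freeness/basis point explicitly, which the paper leaves implicit.
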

\begin{proof}
	By the explicit formula for $S$, it suffices to show for any $a \in H(A)$ and $b,c \in A$,
	\begin{align*}
		(\iota(a),h(b)) = (h(b),h(c))=0.
	\end{align*}
	This follows form $h^2 = h \iota = 0$ and the compatibility assumption.
\end{proof}

\begin{remark}
	Similar homotopy retract is used to transfer the structure of a cyclic $L_\infty$-algebra to its cohomology, see \cite{J} and the reference therein. 
\end{remark}

Now we outline the construction of formal Frobenius manifold structure on the formal neighborhood $\mathcal{M}$ of $0$ in $H(A)$ along the same lines as given in \cite{L}.
For this the universal solution to the quantum master equation for the $L_\infty$ structure on $A$ is important.
Instead of describing explicitly the $L_\infty$ structure or the explicit expression of the quantum master equation, we only point out that according to \cite{CJ}, $\Gamma$ is a solution to the quantum master equation if and only if 
\begin{align} \label{MC equation}
	\Delta e^{\frac{\Gamma}{\hbar}} = 0.
\end{align}
Here we view $\Delta$ as an operator on $A((\hbar))$, the algebra of $A$-valued formal Laurent series in $\hbar$.

\begin{lemma} \label{solution gamma}
	Assume as before, then there is a universal solution to the quantum master equation of the form
	\begin{align*}
		\Gamma = \sum_{i=1}^\mu \alpha_i t^i + \sum_{i,j=1}^\mu \alpha_{ij} t^i t^j + \sum_{i,j,k=1}^\mu \alpha_{ijk}t^i t^j t^k + \cdots,
	\end{align*}
	where $\alpha_1,\cdots,\alpha_\mu$ are given in Lemma \ref{compatibility}, $\alpha$'s with multiple subscript belong to $A[[\hbar]]$ and $t^i$'s are dual coordinates to $a_i$'s in $H(A)$.
\end{lemma}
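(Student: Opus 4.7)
The plan is to construct $\Gamma = \sum_{n \geq 1} \Gamma_n$ recursively on the polynomial degree in the formal variables $t^i$, generalizing the construction of \cite{T} for dGBV algebras to the commutative $BV_\infty$ setting. Here $\Gamma_n$ denotes the homogeneous component of $\Gamma$ of $t$-degree $n$. At the base I set $\Gamma_1 := \sum_{i=1}^\mu \alpha_i t^i$, so that the $(t)^1$-component of the master equation reduces to $\Delta \Gamma_1 = 0$, which holds because each $\alpha_i$ is $\Delta$-closed by Lemma \ref{compatibility} together with Corollary \ref{splitting}.

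For the inductive step, suppose $\Gamma^{(n)} := \Gamma_1 + \cdots + \Gamma_n$ has already been constructed with coefficients in $A[[\hbar]]$ and satisfies $\Delta e^{\Gamma^{(n)}/\hbar} \equiv 0 \pmod{(t)^{n+1}}$. Graded commutativity of $A$ gives $e^{(\Gamma^{(n)} + \Gamma_{n+1})/\hbar} \equiv e^{\Gamma^{(n)}/\hbar} \cdot (1 + \Gamma_{n+1}/\hbar) \pmod{(t)^{n+2}}$, so extracting the coefficient of $(t)^{n+1}$ in the master equation produces
\begin{align*}
	\Delta \Gamma_{n+1} = -\hbar R_{n+1}, \qquad R_{n+1} := \bigl[\Delta e^{\Gamma^{(n)}/\hbar}\bigr]_{(t)^{n+1}}.
\end{align*}
A direct check using $\Delta^2 = 0$ and the inductive hypothesis shows that $R_{n+1}$ is $\Delta$-closed, while a bookkeeping of $\hbar$-weights (each $\Delta_k$ carries a factor of $\hbar^k$, matching the $\hbar^{-j}$ from the $j$-th term of the exponential) shows that $\hbar R_{n+1}$ lies in $A[[\hbar]]$ rather than merely in $A((\hbar))$.

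The core task is then to show that $\hbar R_{n+1}$ is $\Delta$-exact. By Corollary \ref{splitting}, $H(A[[\hbar]], \Delta)$ is freely generated over $\K[[\hbar]]$ by the classes $[\alpha_i]$, so this reduces to verifying that the cohomology class $[\hbar R_{n+1}]$ vanishes. Once this is settled, I would produce an explicit primitive by applying a contracting homotopy for $(A[[\hbar]], \Delta)$ assembled from $h$ and the splitting $S$ of Proposition \ref{splitting map}; setting $\Gamma_{n+1}$ equal to this primitive yields coefficients $\alpha_{i_1 \cdots i_{n+1}} \in A[[\hbar]]$ of the required form and closes the induction.

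The main obstacle I expect is precisely the vanishing of $[\hbar R_{n+1}]$ in $H(A[[\hbar]], \Delta)$, and this is the step where the Hodge-to-de Rham degeneration is indispensable. One would expand $R_{n+1}$ into sums of compositions of $\iota$, $p$, $h$ and the $\Delta_k$'s acting on the previously constructed $\Gamma_k$'s (whose explicit shape in terms of these same data is known inductively from Proposition \ref{splitting map}), and then use the vanishing identity (\ref{HdR}) from Definition \ref{HdR data} to collapse the obstruction class. All other parts of the argument are routine formal manipulations in the graded commutative algebra $A[[\hbar]]$.
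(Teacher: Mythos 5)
Your induction skeleton (construct $\Gamma$ degree by degree in $t$, reduce the step to solving $\Delta\Gamma_{n+1}=-\hbar R_{n+1}$ with $R_{n+1}:=\bigl[\Delta e^{\Gamma^{(n)}/\hbar}\bigr]_{(t)^{n+1}}$) is the same as the paper's, which follows Theorem 2 of \cite{T}. But the decisive step is left open in your proposal: you reduce everything to the vanishing of $[\hbar R_{n+1}]$ in $H(A[[\hbar]],\Delta)$ and then only say you \emph{expect} to prove it by expanding $R_{n+1}$ into compositions of $\iota,p,h,\Delta_k$ and invoking the identity (\ref{HdR}). That is not a proof, and it is not the mechanism that makes the lemma true: (\ref{HdR}) is a statement about sandwiches $p\Delta_{j_1}h\cdots h\Delta_{j_l}\iota$, whereas $R_{n+1}$ involves products of the previously built $\Gamma_j$'s through the algebra structure, so there is no direct way to ``collapse'' it with (\ref{HdR}). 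The actual mechanism is much cleaner: since $\Delta$ preserves $t$-degree, $\hbar R_{n+1}=\Delta\bigl(\bigl[\hbar e^{\Gamma^{(n)}/\hbar}\bigr]_{(t)^{n+1}}\bigr)$ is \emph{manifestly} $\Delta$-exact in $A((\hbar))$; degeneration enters only through Proposition \ref{splitting map} (equivalently Corollary \ref{splitting}), which gives $\Delta=SdS^{-1}$ with $S=\mathrm{id}+O(\hbar)$ invertible on $A[[\hbar]]$ and $A((\hbar))$, i.e.\ $H(A[[\hbar]],\Delta)\cong H(A)[[\hbar]]$ is $\hbar$-torsion free. Hence an element of $A[[\hbar]]$ that is $\Delta$-exact over $A((\hbar))$ is already $\Delta$-exact over $A[[\hbar]]$: write $\hbar R_{n+1}=\Delta y$ with $y\in A((\hbar))$, note $S^{-1}(\hbar R_{n+1})=d(S^{-1}y)$ lies in $A[[\hbar]]$, so the $d$-image of the polar part of $S^{-1}y$ vanishes and $\Gamma_{n+1}:=-S\bigl((S^{-1}y)_{\geq 0}\bigr)\in A[[\hbar]]$ solves the step. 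This is precisely what the paper means by ``by Proposition \ref{splitting map} and equation (\ref{MC equation})'', and it is the point your proposal does not supply.

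A secondary gap: your $\hbar$-bookkeeping claim that $\hbar R_{n+1}\in A[[\hbar]]$ does not follow from merely ``matching $\hbar^k$ against $\hbar^{-j}$'', since the $j$-fold products in the exponential at $t$-degree $n+1$ produce poles of order up to $n+1$ while $\Delta_k$ only supplies $\hbar^{k}$. You must use the hypothesis that $\Delta_k$ has order at most $k+1$: acting on a $j$-fold product with $j>k+1$, the excess terms vanish, so $\hbar^{k}\Delta_k e^{\Gamma/\hbar}=e^{\Gamma/\hbar}\cdot(\text{element of }\hbar^{-1}A[[\hbar]][[t]])$, and together with the inductive hypothesis (which kills the lower-degree factors of $e^{\Gamma^{(n)}/\hbar}$ that would otherwise reintroduce poles) this yields $\hbar R_{n+1}\in A[[\hbar]]$, which is needed for $\Gamma_{n+1}$ to have coefficients in $A[[\hbar]]$ as the lemma asserts. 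With these two points filled in, the rest of your outline (base case $\Delta\alpha_i=0$, $\Delta$-closedness of $R_{n+1}$ from $\Delta^2=0$, universality from the linear term running over a basis) is fine and agrees with the paper's intended argument.
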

\begin{proof}
	This is in fact a slight generalization of the construction in Theorem 2 of \cite{T}.
	By Proposition \ref{splitting map} and equation (\ref{MC equation}) above, we can proceed the construction by induction on powers of $t$'s.
\end{proof}

As in the case of Landau-Ginzburg theory, to a good basis we can associate a good opposite filtration.
Under the assumption in Lemma \ref{compatibility}, let 
\begin{align*}
	\mathcal{L} := \hbar^{-1} \text{Span}_\K \{\alpha_1,\cdots,\alpha_\mu\}[\hbar^{-1}],
\end{align*}
then $\mathcal{L}$ is a good opposite filtration in the sense that
\begin{enumerate}
	\item $\hbar^{-1} \mathcal{L} \subset \mathcal{L}$;
	\item $H(A((\hbar)),\Delta) = H(A[[\hbar]],\Delta) \oplus \mathcal{L}$;
	\item $\Res_{\hbar = 0} \Tr(\varphi,\psi) \td \hbar = 0, \text{ for } \forall \varphi,\psi \in \mathcal{L}$.
\end{enumerate}
See \cite{L} for details.

\begin{proposition}
	Denote by
	\begin{align*}
		\pi: H(A((\hbar)),\Delta)= H(A[[\hbar]],\Delta) \oplus \mathcal{L} \to H(A[[\hbar]],\Delta)
	\end{align*}
	the projection to the first summand, then the universal solution $\Gamma$ in Lemma \ref{solution gamma} can be written in the form
	\begin{align*}
		\Gamma = \sum_i \alpha_i \tau_\hbar^i + \sum_{ij} \alpha_{ij} \tau_\hbar^i \tau_\hbar^j + \cdots
	\end{align*}
	such that $\vec\tau :=\{\tau^i\}$ form coordinates on $\mathcal{M}$, $\tau_\hbar^i = \tau^i + O(\vec{\tau}^2) \in \K[[\hbar]][[\vec\tau]]$ and
	\begin{align*}
		\pi ([\hbar e^\frac{\Gamma}{\hbar}-\hbar]) = \sum_i \alpha_i \tau^i.
	\end{align*}
\end{proposition}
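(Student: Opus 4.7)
The plan is to construct the flat coordinates $\tau^i$ and their $\hbar$-enhanced counterparts $\tau_\hbar^i$ inductively in the $\vec t$-order, exploiting the freedom available in the choice of $\Gamma$ together with the splitting from the good opposite filtration. As a preliminary, since $\Delta(\hbar e^{\Gamma/\hbar}) = 0$ by (\ref{MC equation}), the element $\hbar e^{\Gamma/\hbar} - \hbar$ is $\Delta$-closed in $A((\hbar))[[\vec t]]$, so its cohomology class $\Phi := [\hbar e^{\Gamma/\hbar} - \hbar]$ lies in $H(A((\hbar)),\Delta)[[\vec t]]$. Using Corollary~\ref{splitting}, which identifies $\{\alpha_i\}$ as a $\K[[\hbar]]$-basis of $H(A[[\hbar]],\Delta)$, write $\pi(\Phi) = \sum_i \alpha_i \theta^i$ for some $\theta^i \in \K[[\hbar]][[\vec t]]$. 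At linear order, $\Gamma_1 = \sum_i \alpha_i t^i$ is itself $\Delta$-closed, so $\theta^i = t^i + O(\vec t^2)$; this supplies the base case and ensures that $\{\tau^i\}$, once defined, will be formal coordinates on $\mathcal{M}$.

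For the inductive step at order $n+1$, suppose $\tau^i$, $\tau_\hbar^i$, and a reorganized form of $\Gamma$ have been constructed modulo $O(\vec t^{n+2})$. Compute $\Phi_{n+1}$ and expand its $H(A[[\hbar]],\Delta)$-part in $\{\alpha_i\}$; its $\hbar^0$-coefficient furnishes the new piece of $\tau^i$, while any $\hbar$-positive residue must be eliminated. This is arranged using the ambiguity in the inductive construction of Lemma~\ref{solution gamma}: the representative $\alpha_{i_1 \cdots i_{n+1}} \in A[[\hbar]]$ is determined only up to shifts by $\Delta$-closed elements, and such a shift produces a prescribed change of $\theta^i_{n+1}$ concentrated in $\hbar$-positive orders which can be arranged to cancel the residue. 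The corresponding modification of the expansion of $\Gamma$ is in turn absorbed into the order-$(n+1)$ correction of $\tau_\hbar^i$.

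Having constructed $\tau^i$ with $\pi(\Phi) = \sum_i \alpha_i \tau^i$, invert $\vec t = \vec t(\vec\tau)$ and substitute into $\Gamma$ to obtain a formal series in $\vec\tau$ with $A[[\hbar]]$-coefficients. Solve iteratively for $\tau_\hbar^i \in \K[[\hbar]][[\vec\tau]]$ with $\tau_\hbar^i = \tau^i + O(\vec\tau^2)$ so that $\Gamma = \sum_i \alpha_i \tau_\hbar^i + \sum_{ij} \alpha_{ij} \tau_\hbar^i \tau_\hbar^j + \cdots$; at each $\vec\tau$-order, matching the $\alpha_i$-coefficient on both sides determines the new contribution to $\tau_\hbar^i$ uniquely because $\{\alpha_i\}$ is $\K[[\hbar]]$-linearly independent, with all other terms on the right-hand side fixed from lower orders.

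The main obstacle is the gauge-adjustment step: one must verify that the ambiguity in $\alpha_{i_1 \cdots i_{n+1}}$ is precisely sufficient to kill the $\hbar$-positive part of $\theta^i_{n+1}$, and that the resulting modification lies in the permitted space $\K[[\hbar]][[\vec\tau]]$. This uses both the $\K[[\hbar]]$-freeness of $H(A[[\hbar]],\Delta)$ (Corollary~\ref{splitting}) and the opposite-filtration property $\hbar^{-1}\mathcal{L} \subset \mathcal{L}$, which together control how the $\hbar$-parts of $\Phi$ distribute between the projection and the $\mathcal{L}$-summand.
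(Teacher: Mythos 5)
Your argument can be made to work, but it is more roundabout than what the statement actually requires, and one step as written is internally inconsistent. The paper itself gives no details: it reduces the proposition to Proposition 11 of \cite{L}, the only point needing comment being exactly your preliminary observation that $\hbar e^{\Gamma/\hbar}-\hbar$ is $\Delta$-closed and so defines a class in $H(A((\hbar)),\Delta)[[\vec t]]$. Given that, the short route matching the phrasing of the statement is the following: write $\pi([\hbar e^{\Gamma/\hbar}-\hbar])=\sum_i\alpha_i\theta^i$ with $\theta^i=t^i+O(\vec t^{\,2})\in\K[[\hbar]][[\vec t]]$ (freeness of $H(A[[\hbar]],\Delta)$ over $\K[[\hbar]]$ with basis $\{\alpha_i\}$, plus the splitting $H(A((\hbar)),\Delta)=H(A[[\hbar]],\Delta)\oplus\mathcal{L}$), and then simply invert the $\hbar$-dependent formal change of variables $\vec t\mapsto\vec\theta$: taking $t^i=\tau_\hbar^i(\vec\tau)$ to be this inverse gives $\tau_\hbar^i=\tau^i+O(\vec\tau^{\,2})\in\K[[\hbar]][[\vec\tau]]$, the expansion $\Gamma=\sum_i\alpha_i\tau_\hbar^i+\sum_{ij}\alpha_{ij}\tau_\hbar^i\tau_\hbar^j+\cdots$ with the \emph{same} coefficients as in Lemma \ref{solution gamma}, and $\pi([\hbar e^{\Gamma/\hbar}-\hbar])=\sum_i\alpha_i\tau^i$ by construction. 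No gauge adjustment of $\Gamma$ is needed; the $\hbar$-dependence that bothers you is exactly what the $\tau_\hbar^i$ are there to absorb.

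Your alternative mechanism --- using the ambiguity of $\alpha_{i_1\cdots i_{n+1}}$ up to $\Delta$-closed shifts to kill the $\hbar$-positive part of $\theta^i$ order by order --- is also viable (the needed shifts exist because every prescribed element of $\hbar\K[[\hbar]]\,\mathrm{Span}\{\alpha_i\}$ is the class of a $\Delta$-closed element of $A[[\hbar]]$ of the correct degree, and a shift at order $n+1$ only affects $\theta^i$ at orders $\geq n+1$), but two points need care. First, each such shift changes the solution, so the QME must be re-solved at all higher orders; this is available by the same induction as in Lemma \ref{solution gamma}, but you should say so, and you should acknowledge that you are then proving the statement for a suitably \emph{chosen} universal solution rather than rewriting the given one. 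Second, and relatedly, the sentence claiming the modification of $\Gamma$ is ``absorbed into the order-$(n+1)$ correction of $\tau_\hbar^i$'' conflates the two mechanisms: you cannot both genuinely modify $\Gamma$ (which is what changes $\theta^i$) and treat the modification as a mere re-expansion of the same $\Gamma$ in new variables (which leaves $\theta^i$ unchanged). Choose one: either gauge-fix $\Gamma$ and then take $\tau_\hbar^i$ to be the resulting ($\hbar$-independent) inverse coordinate change, or keep $\Gamma$ fixed and use the inversion argument above, in which case the whole inductive cancellation step is unnecessary.
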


\begin{proof}
	The same proof as that of Proposition 11 in \cite{L} works. This is because $e^\frac{\Gamma}{\hbar}$ represents a class in $H(A((\hbar)),\Delta)$, as in the case of dGBV algebra.
\end{proof}

Let $\mathcal{H}_\mathcal{M}^{(0)}$ be the $\K[[\hbar]][[\vec\tau]]$-submodule of $H(A((\hbar)),\Delta)[[\vec\tau]]$ generated by $[\hbar \partial_{\tau^i} e^\frac{\Gamma}{\hbar}]$, then we have the decomposition
\begin{align*}
	H(A((\hbar)),\Delta)[[\vec\tau]] = \mathcal{H}_\mathcal{M}^{(0)} \oplus \mathcal{L}[[\vec\tau]].
\end{align*}
The map $\partial_{\tau^i} \mapsto [\hbar \partial_{\tau^i} e^\frac{\Gamma}{\hbar}]$ can be used to define a pairing and a product on $T\mathcal{M}$, which eventually leads to a formal Frobenius manifold structure on $\mathcal{M}$.
The readers can find details in Section 2.2.1 in \cite{L}.

\begin{theorem} \label{Frobenius manifold}
	Let $A$ be a commutative $BV_\infty$ algebra with a Hodge-to-de Rham degeneration data $(\iota,p,h)$. If $h$ is compatible with the cyclic structure, then there is a formal Frobenius manifold structure on the formal neighborhood $\mathcal{M}$ of $0$ in $H(A)$.
\end{theorem}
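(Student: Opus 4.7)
The plan is to assemble the preliminary results established in Sections 2 and 3 and then invoke the dGBV framework of \cite{L}, which applies almost verbatim once the correct ingredients are in place.

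First, I would fix a $\K$-basis $a_1,\ldots,a_\mu$ of $H(A)$. By Lemma \ref{compatibility}, the elements $\alpha_i := S(a_i)$ form a good basis of $H(A[[\hbar]],\Delta)$. This produces a good opposite filtration $\mathcal{L} = \hbar^{-1}\,\text{Span}_\K\{\alpha_1,\ldots,\alpha_\mu\}[\hbar^{-1}]$: property (1) is immediate, (2) follows from Corollary \ref{splitting} together with the free-module identification $H(A[[\hbar]],\Delta)\cong H(A)[[\hbar]]$ noted just afterward, and (3) follows because the good basis property makes $\KK(\alpha_i,\alpha_j)$ an $\hbar$-constant, so $\KK(\varphi,\psi)$ lies in $\hbar^{-2}\K[\hbar^{-1}]$ for $\varphi,\psi\in\mathcal{L}$ and therefore has no $\hbar^{-1}$ term to contribute to $\Res_{\hbar=0}\Tr(\varphi,\psi)\,\td\hbar$.

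Second, I would apply Lemma \ref{solution gamma} to obtain the universal solution $\Gamma$ to the quantum master equation with linear part $\sum_i \alpha_i t^i$, and then invoke the Proposition immediately preceding the theorem to rewrite $\Gamma$ in coordinates $\vec\tau$ satisfying $\pi([\hbar e^{\Gamma/\hbar}-\hbar]) = \sum_i \alpha_i \tau^i$. The correspondence $\partial_{\tau^i}\mapsto [\hbar\,\partial_{\tau^i} e^{\Gamma/\hbar}]$ identifies $T\mathcal{M}$ with the free $\K[[\hbar]][[\vec\tau]]$-module $\mathcal{H}_\mathcal{M}^{(0)}$. Using the decomposition $H(A((\hbar)),\Delta)[[\vec\tau]] = \mathcal{H}_\mathcal{M}^{(0)}\oplus\mathcal{L}[[\vec\tau]]$, one defines the metric on $T\mathcal{M}$ via $\KK$ and the product via $\pi$ composed with multiplication of representatives in $A((\hbar))[[\vec\tau]]$.

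The main obstacle is the verification of the Frobenius manifold axioms: flatness of the metric, associativity, compatibility of product and pairing, and the existence of a potential function. However, the argument in Section 2.2.1 of \cite{L} uses only: the quantum master equation $\Delta e^{\Gamma/\hbar}=0$; the cyclic identity $(\Delta a,b)_\hbar = (-1)^{|a|+1}(a,\Delta^- b)_\hbar$ derived after Definition \ref{cyclic BV}; the three defining properties of a good opposite filtration; and formal manipulations of $\hbar$-Laurent series. None of these steps invoke the dGBV restriction $\Delta_k=0$ for $k\geq 2$ or any property of $\Delta$ beyond $\Delta^2=0$. Consequently the Losev construction transfers line by line to the commutative $BV_\infty$ setting, and the remaining work reduces to careful bookkeeping that confirms each step of \cite{L} uses only these transferable ingredients.
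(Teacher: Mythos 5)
Your proposal is correct and follows essentially the same route as the paper: it assembles the good basis from Lemma \ref{compatibility}, the good opposite filtration $\mathcal{L}$, the universal solution from Lemma \ref{solution gamma}, and the decomposition $H(A((\hbar)),\Delta)[[\vec\tau]]=\mathcal{H}_\mathcal{M}^{(0)}\oplus\mathcal{L}[[\vec\tau]]$, and then invokes Section 2.2.1 of \cite{L}, observing (as the paper does) that the dGBV argument there only uses $\Delta^2=0$, the cyclic identity, and the opposite filtration, so it transfers to the $BV_\infty$ setting. Your added verification of properties (1)--(3) of $\mathcal{L}$ (in particular the residue computation from $\hbar$-constancy of $\KK(\alpha_i,\alpha_j)$) is consistent with, and slightly more explicit than, what the paper records.
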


\section{Hodge theoretic homotopy retract}

In this section we show the homotopy retract given by Hodge theory is compatible with the integration pairing on a compact Riemannnian or Hermitian manifold.

Assume first that $X$ is a smooth oriented compact Riemannian manifold of real dimension $n$.
Let $A:=\Omega(X)$ be the algebra of real-valued differential forms on $X$ with Hodge grading, then we have the following well-known Hodge decomposition of operators on $\Omega(X)$:
\begin{align} \label{de Rham decomposition}
	id - \iota p  = \td \td^* G + \td^* \td G,
\end{align}
where $\iota$ is the inclusion of harmonic forms, $p$ is the harmonic projection, $\td^* := (-1)^{n(k-1)+1}*\td * : \Omega^k(X) \to \Omega^{k-1}(X)$ is the Hilbert adjoint of $\td$, $*: \Omega^k(X) \to \Omega^{n-k}(X)$ is the star operator satisfying $*^2 = (-1)^{k(n-k)}$ on $\Omega^k(X)$ and $G$ is the Green operator.
It is straightforward to verify that  $(\iota,p, h:=-\td^* G)$ is a homotopy retract from $(\Omega(X),\td)$ to $(\mathcal{H}_\td(X),0)$, where $\mathcal{H}_\td(X) \cong H(X)$ is the space of $\td$-harmonic forms.

Since $X$ is compact, $\Tr:=\int_X$ on $\Omega(X)$ satisfying for $a,b \in \Omega(X)$,
\begin{align*}
	\int_X \td a \wedge b = (-1)^{|a|+1} \int_X a \wedge \td b.
\end{align*}
Furthermore, for $a,b \in \Omega(X)$ satisfying $|a|+|b|=n+1$,
\begin{align*}
	&\int_X \td^* a \wedge b = (-1)^{|b|(n-|b|)} \int_X \td^* a \wedge *^2 b \\
	=& (-1)^{|b|(n-|b|)} \langle \td^* a ,*b \rangle = (-1)^{|b|(n-|b|)} \langle a, \td *b \rangle \\
	=& (-1)^{|b|(n-|b|)} \int_X a \wedge *\td * b  \\
	=& (-1)^{|b|(n-|b|)+n(|b|-1)+1} \int_X a \wedge \td^* b \\
	=& (-1)^{|a|} \int_X a \wedge \td^* b,
\end{align*}
where $\langle -,-\rangle$ is induced by the Riemannian metric.
Using the same argument and the fact that $G$ is self-adjoint with respect to $\langle -,-\rangle$, one can show
\begin{align*}
	\int_X h (a) \wedge b = (-1)^{|a|} \int_X a \wedge h (b).
\end{align*}
So $h=-\td^* G$ is compatible with the integration pairing.
\newline

Assume now that $X$ is a compact Hermitian manifold of real dimension $n (=2\, \text{dim}_\C X)$.
Let now $\Omega(X)$ be the algebra of Dolbeault forms on $X$ with Hodge grading, then similarly we have:
\begin{align} \label{Dolbeault decompositon}
	id - \iota p = \bar\partial \bar\partial^*G + \bar\partial^* \bar\partial G,
\end{align}
where $\bar\partial^* := -*\partial *$ is the Hilbert adjoint of $\bar\partial$ and other operators has similar meaning to those for Riemannian manifolds.
It turns out that $(\iota,p,h:= -\bar\partial^* G)$ is a homotopy retraction from $(\Omega(X), \bar\partial)$ to $(\mathcal{H}_{\bar\partial}(X),0)$, where $\mathcal{H}_{\bar\partial}(X) \cong H(X)$ is the space of $\bar\partial$-harmonic forms.

Let again $\Tr:=\int_X$ be the integration map on $X$. As pointed out in \cite{CZ00}, we have:
\begin{align*}
	\int_X \bar\partial a \wedge b =& (-1)^{|a|+1} \int_X a \wedge \bar\partial b, \\
	\int_X \partial^* a \wedge b =& (-1)^{|a|} \int_X a \wedge \partial^* b,
\end{align*}
where $\partial^*$ is the Hilbert adjoint operator of $\partial$.
By the corresponding results for $\td^*$ and $\partial^*$, for $a,b\in \Omega(X)$ satisfying $|a| =|b|+1$,
\begin{align*}
	\int_X \bar\partial^* a \wedge b = (-1)^{|a|} \int_X a \wedge \bar\partial^* b.
\end{align*}
One can argue similarly that $h = -\bar\partial^*G$ is compatible with the integration pairing.
\newline

The final ingredient we need is an identity on the contraction map.
Let $w$ be a $k$-vector field on $X$ and $w \vdash: \Omega^p(X) \to \Omega^{p-k}(X)$
be the map of contracting with $w$.
\begin{lemma} \label{contraction}
	If $w$ is a $k$-vector field and $a,b \in \Omega(X)$ satisfy $|a|+|b| = n+k$, then
	\begin{align*}
		\int_X (w \vdash a) \wedge b = (-1)^{k(|a|+1)} \int_X a \wedge (w \vdash b)
	\end{align*}
\end{lemma}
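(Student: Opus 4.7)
The plan is to reduce to decomposable $k$-vector fields via linearity and a partition of unity, establish the single-vector case by the Leibniz rule for contraction combined with a dimensional vanishing, and then induct on $k$. Both sides of the desired identity are $C^\infty(X)$-linear in $w$, so by a partition of unity subordinate to a cover of $X$ by coordinate charts I may assume $w$ is supported in a single chart. There $w$ is a finite $C^\infty$-linear combination of decomposable multi-vectors $\partial_{i_1}\wedge\cdots\wedge\partial_{i_k}$, so by linearity once more it suffices to treat $w = v_1\wedge\cdots\wedge v_k$.

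For the base case $k=1$, the Leibniz rule for contraction states
\[
v\vdash(a\wedge b) = (v\vdash a)\wedge b + (-1)^{|a|}\, a\wedge(v\vdash b).
\]
Under the hypothesis $|a|+|b| = n+1$, the product $a\wedge b$ is an $(n+1)$-form on an $n$-manifold and vanishes identically, so the left-hand side is zero. Integrating the two remaining terms over $X$ yields $\int_X (v\vdash a)\wedge b = (-1)^{|a|+1}\int_X a\wedge(v\vdash b)$, which is the statement for $k=1$.

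For the inductive step, write $w = v\wedge u$ with $u$ a $(k-1)$-vector field and $v$ a single vector field, so that under the natural convention $w\vdash = v\vdash\circ u\vdash$. Applying the $k=1$ identity moves $v\vdash$ from the left slot of the pairing to the right slot (the degree condition $|u\vdash a|+|b| = n+1$ is satisfied), and then the inductive hypothesis applied to $u$ moves $u\vdash$ across (the degree condition $|a|+|v\vdash b| = n+(k-1)$ is also satisfied). A final reordering of the iterated interior products recovers $w\vdash b$ from $u\vdash(v\vdash b)$, contributing the sign $(-1)^{k-1}$ by anticommutativity. Summing the three exponents, $(|a|-k+2)+(k-1)(|a|+1)+(k-1)$, collapses to exactly $k(|a|+1)$, matching the claimed sign.

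The only real obstacle is careful sign bookkeeping together with fixing a consistent convention for how $w\vdash$ acts on decomposable multi-vectors; once these are pinned down, every step is a routine application of the Leibniz rule, the induction hypothesis, and the anticommutativity of interior products.
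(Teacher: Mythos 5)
Your proposal is correct and follows essentially the same route as the paper: reduce to decomposable multi-vectors by $C^\infty$-linearity and a partition of unity, then move the contractions across one at a time using the Leibniz rule for $v\vdash$ together with the vanishing of forms of degree exceeding $n$; your sign total $(|a|-k+2)+(k-1)(|a|+1)+(k-1)=k(|a|+1)$ agrees with the paper's computation. Organizing the iterated transfer as an induction on $k$ rather than a single chain of equalities is only a presentational difference.
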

\begin{proof}
	By partition of unity, we may assume in local coordinates $\{x_i\}$, 
	\begin{align*}
		w = \sum_{i_1<\cdots< i_k} w^{i_1 \cdots i_k} \frac{\partial}{\partial x_{i_1}} \wedge \cdots \wedge \frac{\partial}{\partial x_{i_k}}.
	\end{align*}
	Using the fact that $\frac{\partial}{\partial x_i} \vdash$ is a derivation and forms of degree greater than $n=\text{dim}_\R \,X$ must vanish, we have
	\begin{align*}
		&\int_X (w \vdash a) \wedge b \\
		=& \sum_{i_1<\cdots< i_k} w^{i_1 \cdots i_k}\int_X  (\frac{\partial}{\partial x_{i_1}} \vdash \cdots \frac{\partial}{\partial x_{i_k}} \vdash a) \wedge b \\
		=& (-1)^{|a|-k} \sum_{i_1<\cdots< i_k} w^{i_1 \cdots i_k}\int_X (\frac{\partial}{\partial x_{i_2}} \vdash \cdots \frac{\partial}{\partial x_{i_k}} \vdash a) \wedge (\frac{\partial}{\partial x_{i_1}} \vdash b) \\
		=& (-1)^{k|a|-\sum_{s=1}^k s} \sum_{i_1<\cdots< i_k} w^{i_1 \cdots i_k} \int_X a \wedge (\frac{\partial}{\partial x_{i_k}} \vdash \cdots \frac{\partial}{\partial x_{i_1}} \vdash b) \\
		=& (-1)^{k|a|-\sum_{s=1}^k s+ \sum_{s=1}^{k-1} s} \int_X a \wedge (w \vdash b) \\
		=& (-1)^{k(|a|+1)} \int_X a \wedge (w \vdash b).
	\end{align*}
\end{proof}

\section{Applications}

Now we are ready to apply the framework above to some commutative $BV_\infty$ algebras of geometric origin.

Consider first the case of compact Jacobi manifolds.
By definition, a Jacobi manifold is a smooth manifold $X$ with a bi-vector field $\pi$ and a vector field $\eta$ satisfying
\begin{align*}
	[\pi,\pi] = 2\eta \pi \quad \text{and} \quad [\pi,\eta] = 0,
\end{align*}
where $[-,-]$ is the Schouten-Nijenhuis bracket. Note that Poisson manifolds correspond to the case $\eta=0$.
In \cite{DSV13} (see also \cite{DSV15}), it is shown that given a Jacobi manifold $(X,\pi,\eta)$,
\begin{align*}
	(\Omega(X), \Delta_0 := \td, \Delta_1 := [\pi \vdash, \td], \Delta_2 := \eta\pi \vdash, \Delta_{>2} := 0)
\end{align*}
is a commutative $BV_\infty$ algebra admitting a Hodge-to-de Rham degeneration data.
The following theorem removes the assumption for Poisson manifolds on isomorphism of cohomologies in Theorem 2.1 of \cite{CZ98} and generalizes the corresponding results to Jacobi manifolds.

\begin{theorem}
	Let $X$ be an oriented compact Jacobi manifold, then there exists a formal Frobenius manifold structure on the formal neighborhood of $0$ in $H(X,\R)$.
\end{theorem}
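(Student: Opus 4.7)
The plan is to verify the hypotheses of Theorem \ref{Frobenius manifold} for the commutative $BV_\infty$ algebra
\[
(\Omega(X),\ \Delta_0 = d,\ \Delta_1 = [\pi\vdash, d],\ \Delta_2 = \eta\pi\vdash)
\]
attached to the Jacobi manifold $(X,\pi,\eta)$. Its Hodge-to-de Rham degeneration data is already supplied by \cite{DSV13, DSV15}, and the Hodge-theoretic homotopy retract $(\iota, p, h = -d^*G)$ from Section~4 is the natural choice of such data. Thus, to apply Theorem \ref{Frobenius manifold} via Lemma \ref{compatibility}, it suffices to exhibit a cyclic structure for which this $h$ is compatible.

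For the cyclic structure I would take $\Tr := \int_X$. The induced pairing $(a,b) = \int_X a\wedge b$ has degree $-n$, and its descent to cohomology is Poincar\'e duality, which is perfect because $X$ is oriented and compact; this gives condition (2) of Definition \ref{cyclic BV}. The compatibility of $h = -d^*G$ with this pairing was already established in Section~4.

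What remains is the skew-adjointness $(\Delta_k a, b) = (-1)^{|a|+k+1}(a, \Delta_k b)$ for $k = 0, 1, 2$. For $\Delta_0 = d$ this is Stokes' theorem. For $\Delta_2 = \eta\pi\vdash$, Lemma \ref{contraction} applied to the $3$-vector field $\eta\pi$ gives $\int_X (\eta\pi\vdash a)\wedge b = (-1)^{3(|a|+1)}\int_X a\wedge (\eta\pi\vdash b)$, whose sign reduces modulo $2$ to the required $(-1)^{|a|+3}$. For $\Delta_1 = [\pi\vdash, d] = \pi\vdash\circ d - d\circ\pi\vdash$, I would combine Lemma \ref{contraction} for the $2$-vector field $\pi$, which makes $\pi\vdash$ symmetric for the pairing, with the adjointness of $d$, and then expand the graded commutator; careful sign bookkeeping then collapses everything to the factor $(-1)^{|a|}$ needed for $k = 1$. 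With all hypotheses of Theorem \ref{Frobenius manifold} verified, the desired formal Frobenius manifold structure on the formal neighborhood of $0$ in $H(X,\R)$ follows. The only slightly delicate step is the sign bookkeeping for $\Delta_1$; the other verifications are either immediate or direct applications of Lemma \ref{contraction}.
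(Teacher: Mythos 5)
Your proposal is correct and follows essentially the same route as the paper: take $\Tr=\int_X$, use Poincar\'e duality for non-degeneracy, use the Hodge-theoretic retract and its compatibility from Section~4, check the (anti)self-adjointness of $\Delta_0,\Delta_1,\Delta_2$ (with Lemma \ref{contraction}, $k=3$, handling $\Delta_2$), and invoke Theorem \ref{Frobenius manifold}. The only difference is that for $\Delta_1=[\pi\vdash,d]$ you derive the identity directly from Lemma \ref{contraction} with $k=2$ together with Stokes' theorem rather than citing the Poisson-manifold computation in \cite{CZ98}; the sign bookkeeping you defer does indeed close, yielding the required factor $(-1)^{|a|}$.
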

\begin{proof}
	Let $(-,-)$ be the integration pairing, we will show it induces a cyclic $BV_\infty$ structure on $\Omega(X)$. 
	Since $X$ is oriented, $H(X)$ is of finite dimension and the pairing on $H(X)$ is non-degenerate. To verify (2) of Definition \ref{cyclic BV}, note that the identify for $\Delta_0$ and $\Delta_1$ follows from the same results for Poisson manifolds, see for example \cite{CZ98}. For $\Delta_2$, it follows from the $k=3$ case of Lemma \ref{contraction} since $\eta\pi$ is a $3$-vector field. The theorem now follows from the argument in previous section and Theorem \ref{Frobenius manifold}.
\end{proof}
\begin{remark}
	Unfortunately, our construction does not work for the so-called generalized Poisson manifolds (see \cite{BL}) because integration pairing does not induce a cyclic $BV_\infty$ structure..
\end{remark}

Next we turn to the case of compact Hermitian manifolds.
An Hermitian metric on a complex manifold $X$ gives rise to a real $(1,1)$-form $\omega$ which is not necessarily closed. Let $L: \Omega^k(X) \to \Omega^{k+2}(X)$ be the Lefschetz operator defined by $L a := \omega \wedge a$ and $\Lambda:= L^*$ be its adjoint operator.
Consider the operators
\begin{align*}
	\lambda :=[\partial,L] = \partial \omega \qquad \tau:= [\Lambda,\lambda]
\end{align*}
and their adjoint operators $\lambda^*$ and $\tau^*$.

When $\omega$ is K\"{a}hler, we have $\lambda = \tau =0$. It is proved in \cite{CZ00} that in this case $(\Omega(X), \wedge, \Delta_0 = \bar\partial, \Delta_1 = -i\partial^*)$ is a dGBV algebra satisfying the $ddbar$-condition.
It is generalized in \cite{CW} that in the Hermitian case,
\begin{align*}
	(\Omega(X), \wedge, \Delta_0 = \bar\partial, \Delta_1 = -i(\partial^*+\tau^*), \Delta_2 = i \lambda^*, \Delta_{\geq 3} = 0)
\end{align*}
is a commutative $BV_\infty$ algebra admitting a Hodge-to-de Rham degeneration data.
The following theorem generalize the corresponding results for K\"ahler manifolds in \cite{CZ00}.

\begin{theorem}
	Let $X$ be a compact Hermitian manifold, then there exists a formal Frobenius manifold structure on the formal neighborhood of $0$ in $H(X)$.
\end{theorem}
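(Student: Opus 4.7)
The plan is to verify the hypotheses of Theorem \ref{Frobenius manifold} for the algebra $\Omega(X)$ with trace $\Tr := \int_X$. By \cite{CW}, the triple $(\Omega(X), \wedge, \Delta)$ with the $\Delta_k$ displayed above is a commutative $BV_\infty$ algebra admitting a Hodge-to-de Rham degeneration data via the Hodge-theoretic retract $(\iota, p, h := -\bar\partial^* G)$, and the previous section already shows that this $h$ is compatible with the integration pairing. Perfectness of the induced pairing on $H(X)$ is Serre duality for Dolbeault cohomology of the compact complex manifold $X$, so the dimension in the sense of Definition \ref{cyclic BV} is $2\dim_\C X$. It therefore remains only to verify the adjoint identity $(\Delta_k a, b) = (-1)^{|a|+k+1}(a, \Delta_k b)$ for $k = 0, 1, 2$.

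The case $k = 0$ is integration by parts for $\bar\partial$, already done in the previous section. For $k = 2$ with $\Delta_2 = i \lambda^*$: since $\lambda$ is wedge product with the $(2,1)$-form $\partial \omega$, its Hermitian adjoint $\lambda^*$ is contraction with the (complex) $3$-vector $(\partial \omega)^{\sharp}$ obtained by raising indices with the Hermitian metric. Applying Lemma \ref{contraction} (extended $\C$-linearly) with $k = 3$ gives
\[
(\lambda^* a, b) = (-1)^{3(|a|+1)}(a, \lambda^* b) = (-1)^{|a|+1}(a, \lambda^* b),
\]
which matches the required sign $(-1)^{|a|+2+1}$.

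The subtle case is $k = 1$ with $\Delta_1 = -i(\partial^* + \tau^*)$. Rather than verifying the adjoint of $\tau^*$ in isolation—awkward, since $\tau = [\Lambda, \lambda]$ is a first-order operator that is neither a pure wedge nor a pure contraction—I would invoke the Hermitian analogue of the K\"ahler identity $[\Lambda, \bar\partial] = -i(\partial^* + \tau^*)$, which realizes $\Delta_1 = [\Lambda, \bar\partial]$ directly as a graded commutator. Since $\Lambda$ is contraction with the $2$-vector dual to $\omega$, Lemma \ref{contraction} with $k = 2$ yields $(\Lambda a, b) = (a, \Lambda b)$. A short graded-commutator calculation, combining this self-adjointness of $\Lambda$ with the already established sign for $\bar\partial$, then gives
\[
([\Lambda, \bar\partial] a, b) = (-1)^{|a|}(a, [\Lambda, \bar\partial] b),
\]
which is the required sign $(-1)^{|a|+1+1}$. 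With the three adjoint identities verified, Theorem \ref{Frobenius manifold} applies and produces the desired formal Frobenius manifold structure on the formal neighborhood of $0$ in $H(X)$.

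The main potential obstacle is the treatment of $\tau^*$ inside $\Delta_1$, and the key insight is the Hermitian K\"ahler-type identity, which replaces a direct (and unpleasant) sign analysis for $\tau^*$ by a clean commutator argument reducing everything to the behavior of $\Lambda$ and $\bar\partial$ under the integration pairing.
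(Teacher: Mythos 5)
Your proposal is correct and follows the same overall strategy as the paper (verify that the integration pairing makes the Cirici--Wilson $BV_\infty$ algebra cyclic, note the compatibility of the Hodge-theoretic homotopy $h=-\bar\partial^*G$ established earlier, then invoke Theorem \ref{Frobenius manifold}), but you handle the crucial $\Delta_1$ term differently. The paper splits $\Delta_1=-i\partial^*-i\tau^*$, quotes \cite{CZ00} for the $\partial^*$ part, and then treats $\tau^*$ head-on by expanding $\tau^*=(\Lambda\lambda-\lambda\Lambda)^*=\lambda^*L-L\lambda^*$ and applying Lemma \ref{contraction} with $k=3$ together with the fact that wedging with the even-degree form $\omega$ is symmetric for the pairing; you instead invoke the Hermitian commutation relation $[\Lambda,\bar\partial]=-i(\partial^*+\tau^*)$ (Demailly's torsion-twisted K\"ahler identity, which is also how the operator $\Delta_1$ arises in \cite{CW}), combine the self-adjointness of $\Lambda$ from Lemma \ref{contraction} with $k=2$ with the Stokes identity for $\bar\partial$, and conclude by the graded-commutator sign computation, which indeed yields the required sign $(-1)^{|a|}$. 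Your route is cleaner and more uniform -- it avoids citing \cite{CZ00} for $\partial^*$ and avoids the separate sign chase for $\tau^*$ -- at the cost of importing a nontrivial external identity, whereas the paper's argument is self-contained given Lemma \ref{contraction}. Two small points of precision: the Hermitian adjoint of wedging with the $(2,1)$-form $\partial\omega$ is contraction with the index-raised \emph{conjugate} form $\overline{\partial\omega}$ (this is immaterial, since all that is used is that $\lambda^*$ is contraction with some $3$-vector field, exactly as in the paper), and your explicit appeal to Serre duality for the perfectness of the pairing makes explicit what the paper leaves implicit.
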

\begin{proof}
	Again let $(-,-)$ be the integration pairing. 
	With in mind the argument of the previous section and comparing with the case of Jacobi manifold, it suffices to show $(-,-)$ induces a cyclic $BV_\infty$ structure on $\Omega(X)$. This amounts to verify (2) of Definition \ref{cyclic BV}. For $\Delta_0$, the identity is trivial. For $\Delta_2$, $\lambda^*$ acts as a contraction with a $3$-vector field since $\lambda$ is a $3$-form, therefore the identity follows from the $k=3$ case of Lemma \ref{contraction}. For $\Delta_1$, the identify for the term $-i \partial^*$ follows from \cite{CZ00}.
	As to the term $-i\tau^*$, we have
	\begin{align*}
		&\int_X (-i\tau^*) (a) \wedge b \\
		=& -i \int_X (\Lambda \lambda - \lambda \Lambda)^* (a) \wedge b \\
		=& -i \int_X (\lambda^* L - L \lambda^*) (a) \wedge b \\
		=& -i (-1)^{3(|a|+1)} \int_X a \wedge (L \lambda^* - \lambda^* L)(b) \\
		=& -i (-1)^{3(|a|+1)+1} \int_X a \wedge (\lambda^* L - L \lambda^*)(b) \\
		=& (-1)^{|a|}\int_X a \wedge (-i\tau^*) (b),
	\end{align*}
	where the third identity follows from the $k=3$ case of Lemma \ref{contraction} and the fact that $\omega$ is a $2$-form.
\end{proof}

Same argument as above can be applied to more commutative $BV_\infty$ algebras, though we will not attempt to do here.
In fact, we have shown that given a commutative $BV_\infty$ algebra $(A,\Delta)$ that satisfies the Hodge-to-de Rham degeneration property, if it can be equipped with a cyclic structure and the differential $\Delta_0 = d$ has a compatible Hodge theory, then the construction in this paper applies.
Thus we obtain a generalization of \cite{BK} to Jaboci manifolds and Hermitian manifolds different from that in \cite{DSV15} and \cite{CW}.
We also point out that the construction of Frobenius manifold for Landau-Ginzburg model in \cite{LW} also fit with the framework of this paper.

\section*{acknowledgements}

This work is supported by Young Scientists Fund of the National Natural Science Foundation of China (Grant No. 12201314).

\Addresses


\begin{thebibliography}{Sh}
	
	
	\bibitem[BK]{BK} S. Barannikov and M. Kontsevich. \emph{Frobenius manifolds and formality of Lie algebras of polyvector fields}. International Mathematics Research Notices 1998.4 (1997): 201-215.
	
	\bibitem[BL]{BL} C. Braun and A. Lazarev. \emph{Homotopy BV algebras in Poisson geometry}. Transactions of the Moscow Mathematical Society 74 (2013): 217-227.
	
	\bibitem[CJ]{CJ} K. Cieliebak and J. Latschev. \emph{The role of string topology in symplectic field theory}. New perspectives and challenges in symplectic field theory 49 (2009): 113-146.
	
	\bibitem[CW]{CW} J. Cirici and S. O. Wilson. \emph{Homotopy BV-algebras in Hermitian geometry}. arXiv preprint arXiv:2403.12314 (2024).
	
	\bibitem[CZ98]{CZ98} H. D. Cao and J. Zhou. \emph{Identification of Two Frobenius Manifolds In Mirror Symmetry}. arXiv preprint math/9805095 (1998).
	
	\bibitem[CZ00]{CZ00} H. D. Cao and J. Zhou. \emph{Frobenius manifold structure on Dolbeault cohomology and mirror symmetry}. Comm. Anal. Geom., 8(4):795–808, 2000.
	
	\bibitem[DSV13]{DSV13} V. Dotsenko, S. Shadrin and B. Vallette. \emph{Givental group action on topological field theories and homotopy Batalin–Vilkovisky algebras}. Advances in Mathematics 236 (2013): 224-256.
	
	\bibitem[DSV15]{DSV15} V. Dotsenko, S. Shadrin and B. Vallette. \emph{De Rham cohomology and homotopy Frobenius manifolds}. Journal of the European Mathematical Society 17.3 (2015): 535-547.
	
	\bibitem[DV]{DV} G. C. Drummond-Cole and B. Vallette. \emph{The minimal model for the Batalin–Vilkovisky operad}. Selecta Mathematica 19.1 (2013): 1-47.
	
	\bibitem[GM]{GM} A. Guan and F. Muro. \emph{Operations on the de Rham cohomology of Poisson and Jacobi manifolds}. arXiv preprint arXiv:2312.07321 (2023).
	
	\bibitem[J]{J} Y. F. Jiang. \emph{Motivic Milnor fibre of cyclic  $L_\infty$-algebras}. Acta Mathematica Sinica, English Series 33 (2017): 933-950.
		
	\bibitem[K]{K} O. Kravchenko. \emph{Deformations of Batalin-Vilkovisky algebras}. Banach Center Publications 51.1 (2000): 131-139.
	
	\bibitem[L]{L} S. Li. \emph{Some Classical/Quantum Aspects of Calabi-Yau Moduli}. B-Model Gromov-Witten Theory, page 463-497 (2018).
	
	\bibitem[LW]{LW} S. Li and H. Wen. \emph{On the L2-Hodge theory of Landau-Ginzburg models}. Advances in Mathematics 396 (2022): 108165.
	
	\bibitem[M]{M} Y. Manin \emph{Frobenius Manifolds, Quantum Cohomology, and Moduli Spaces}. Amer. Math. Soc. Colloq. Publ. 47, Amer. Math. Soc., Providence, RI (1999).
	
	\bibitem[T]{T} J. Terilla. \emph{Smoothness theorem for differential BV algebras}. Journal of Topology 1.3 (2008): 693-702.
	
\end{thebibliography}
\end{document}